\newtheorem{theorem}{Theorem}
\begin{document}

\title{Optimal Time Scheduling for Wireless-Powered Backscatter Communication Networks}

\author{Nguyen Van Huynh$^1$, Dinh Thai Hoang$^1$, Dusit Niyato$^1$, Ping Wang$^1$, and Dong In Kim$^2$ \\
	$^1$ School of Computer Science and Engineering, Nanyang Technological University (NTU), Singapore\\
	$^2$ School of Information and Communication Engineering, Sungkyunkwan University (SKKU), Korea	\vspace{-5mm}}
\maketitle
\begin{abstract}
This letter introduces a novel wireless-powered backscatter communication system which allows sensors to utilize RF signals transmitted from a dedicated RF energy source to transmit data. In the proposed system, when the RF energy source transmits RF signals, the sensors are able to backscatter the RF signals to transmit date to the gateway and/or harvest energy from the RF signals for their operations. By integrating backscattering and energy harvesting techniques, we can optimize the network throughput of the system. In particular, we first formulate the time scheduling problem for the system, and then propose an optimal solution using convex optimization to maximize the overall network throughput. Numerical results show a significant throughput gain achieved by our proposed design over two other baseline schemes.
\end{abstract}

\begin{IEEEkeywords}
Bistatic backscatter, ambient backscatter, RF energy harvesting, IoT, low-power sensor networks.
\end{IEEEkeywords}

\section{Introduction}

\IEEEPARstart{R}{ecently}, wireless-powered communication~\cite{WPCN} has emerged as a promising solution to support power-constrained wireless networks such as wireless sensor networks (WSNs) and Internet-of-Things (IoT). In a wireless powered communication network (WPCN), wireless devices can harvest energy from a dedicated or an ambient FR source, then use such harvested energy for their own data transmission. However, a wireless-powered transmitter may require a long period of time to acquire sufficient energy for active transmissions, and thus the performance of the system is not high. Several approaches have been proposed to deal with this issue such as scheduling time for harvesting and transmitting processes, increasing the transmit power of RF sources, and adopting multiple-input multiple-output (MIMO) technology~\cite{WPCN}. Nevertheless, these solutions are costly, bulky, and not efficient for low-power WSNs. Recently, ambient backscatter communication has been introduced as a cutting-edge technology that enables two wireless nodes to communicate without requiring active RF transmissions~\cite{Huynh2017Survey}. Using ambient backscatter communications, wireless devices can communicate with each other by modulating and reflecting surrounding ambient signals. Thus, this technology can be integrated into WPCNs to improve performance for current WPCNs. However, when being integrated into WPCNs, we need to address the tradeoff problem between backscattering time and energy harvesting time to maximize the overall throughput for the WPCNs.

In~\cite{Ju2014Throughput}, the authors proposed a novel protocol, namely harvest-then-transmit (HTT), to maximize throughput for WPCNs. This protocol will optimize the energy harvesting time in the first phase, and then allocate transmission time for wireless devices based on time-division-multiple-access technique in the second phase. In~\cite{Zhou2017An}, the authors designed a backscatter transmitter selection technique to address multiple access problem in backscatter communication systems. At each time slot, the tag reader will select only one sensor with the best channel condition to perform backscattering. The idea of integrating backscatter communications into RF-powered cognitive radio networks (CRN) was first introduced in~\cite{Hoang2017Ambient}. This idea allows a wireless device to tradeoff between energy harvesting and backscatter time based on their capability and received signal conditions, thereby maximizing its throughput. This work was then extended in~\cite{Hoang2017Optimal}  to maximize the overall throughput for an RF-powered CRN with multiple wireless devices. In both~\cite{Hoang2017Ambient} and~\cite{Hoang2017Optimal}, by incorporating the HTT and backscatter communication techniques, the authors demonstrated that the performance of the secondary system can be improved significantly compared with using either backscatter communication or HTT protocol.

Different from all aforementioned work, in this letter we study the energy and communication efficiency problem for a low-energy communication system with multiple WPCNs co-exiting in the same area. In this system, there is one wireless energy source used to simultaneously supply energy for multiple WPCNs as illustrated in Fig.~\ref{Fig.system_model_new}. In the energy harvesting phase, we can schedule for WPCNs to backscatter and harvest energy alternately to avoid interference among them. Based on the amount of harvested energy, we then can optimize the transmission time for WPCNs to maximize the overall throughput of the network. To obtain the optimal tradeoff between the backscattering, energy harvesting, and data transmission time among WPCNs, we formulate the optimization problem and prove that this problem is concave. Through simulation results, we demonstrate that the proposed solution always achieves the best performance compared with current optimal scheduling mechanisms. 
\section{System Model}
\label{Sec.System}

\subsection{Network Setting} 
We consider a scenario in which there are $N$ WPCNs coexisting in the same area, and they are supplied energy by an RF energy source as shown in Fig.~\ref{Fig.system_model_new}. Each WPCN includes one pair of transmitter and receiver (i.e., sensor and its corresponding gateway as illustrated in Fig.~\ref{Fig.system_model_new}). When the energy source transmits signals, the sensors can harvest energy and store in their batteries to serve for their internal operations as well as data transmissions to their gateways. Alternatively, the sensors can utilize such signals to transmit data through implementing backscatter communication technologies as shown in~\cite{Huynh2017Survey},~\cite{Hoang2017Optimal}, and~\cite{LiuAmbient2013}.
\begin{figure}[tbh]
	\captionsetup{singlelinecheck=off}
	\centering
	\includegraphics[scale=0.25]{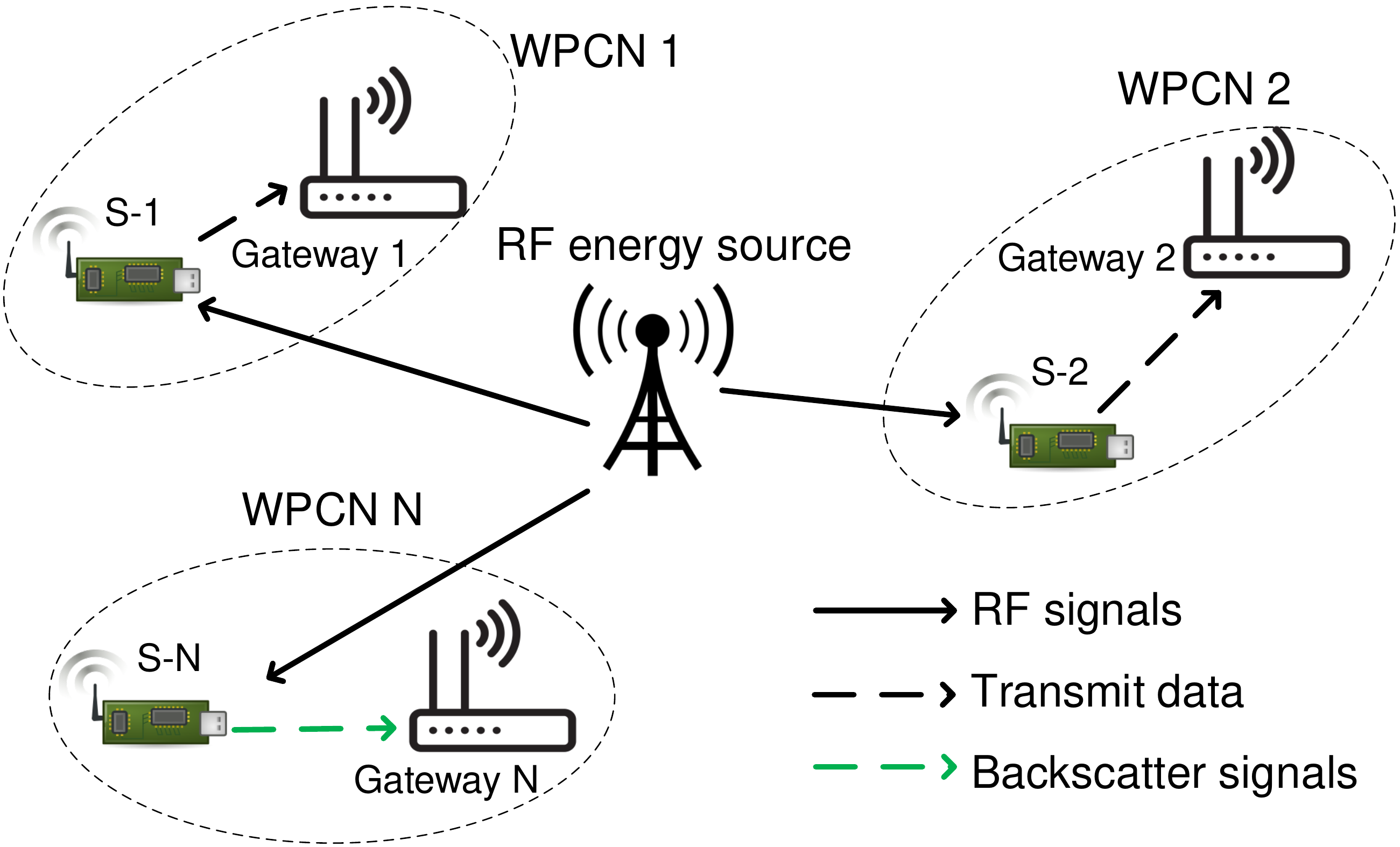}
	\caption{System model.}
	\label{Fig.system_model_new}
\end{figure}
\subsection{Energy Tradeoff and Scheduling}
Similar to~\cite{Ju2014Throughput}, we also consider two successive working phases, i.e., energy harvesting and data transmission phases. However, in the first phase, i.e., energy harvesting phase, the sensors can either backscatter signals to transmit data or harvest energy and use such energy to transmit data in the second phase. Importantly, in our system, when a sensor backscatters signals, all other sensors still can harvest energy as usual. This characteristic allows us to maximize the resource use of RF signals transmitted by the energy source. However, if more than one WPCN backscatter at the same time, they can cause interference for each other. Thus, we need to schedule backscattering time for sensors to avoid such interference. Backscatter time of a sensor is allocated depending on its energy demand, backscatter capability, channel conditions, and the time relation with other sensors to maximize the overall network throughput. Compared with time scheduling problem in [5], our problem is much more complicated because the the energy harvesting and data transmission time are not independent. The more time we allocate for the first phase, the less time we have for sensors to transmit data in the second phase. Furthermore, due to the energy constraint of sensors in low-power sensor networks, we consider the energy optimization problem for sensors more seriously by controlling the amount of harvested energy and transmission power at sensors.

We denote $t^b_n$ as the normalized time period for sensor S-$n$ to backscatter signals, and thus the normalized time period for harvesting energy of sensor S-$n$ is $(\sum_{n=1}^{N}t^b_n)-t^b_n$. The normalized time for data transmission of sensor S-$n$ is denoted as $t^a_n$. Moreover, we denote ${\mathbf{t}}^{\mathrm{b}} = \left[ \begin{array}{ccc} t^b_1 & \cdots & t^b_N \end{array} \right]^\top$ and ${\mathbf{t}}^{\mathrm{a}} = \left[ \begin{array}{ccc}  t^a_1 & \ldots & t^a_N \end{array} \right]^\top$ as the vectors of backscattering time and transmission time of sensors in the network, respectively. Then the following constraints are imposed, i.e., 
\begin{equation}
\begin{aligned}
(\mathbf{C}_0) \phantom{10}		\text{s.t.}		\left\{	\begin{array}{ll}
\sum_{n=1}^{N}t^b_n+\sum_{n=1}^{N}t^a_n \leq 1,		\\
t^b_n, t^a_n \geq 0, \forall n \in \{1, \ldots, N\}.
\end{array}	\right.
\end{aligned}
\label{eq:C0}
\end{equation}
This constraint is to ensure that the time variables are non-negative and the total backscatter and transmission time does not exceed the normalized time frame. 

\section{Problem Formulation} 

We aim to maximize the overall network throughput, i.e., the total number of information bits transmitted by all the sensors in the network per time unit of the network. We denote $R_{\mathrm{sum}}$ as the overall transmission rate which is defined as $R_{\mathrm{sum}} = \sum_{n=1}^{N} R_n = \sum_{n=1}^{N} \big( R_n^{\mathrm{b}} + R_n^{\mathrm{a}} \big)$,
where $R_n^{\mathrm{b}}$ and $R_n^{\mathrm{a}}$ are the numbers of transmitted bits in the backscatter mode and the HTT mode of sensor S-$n$ in one time unit, respectively. 

\subsection{Backscatter Mode} 
\label{subsec:BM}

Let $B^{\mathrm{b}}_n$ denote the transmission rate from using the bistatic backscatter communication of sensor S-$n$. Then, the total number of bits transmitted using the backscatter mode for sensor S-$n$ is expressed as $R^{\mathrm{b}}_n =\eta_n t_n^b B^{\mathrm{b}}_n$,
where $t_n^b$ is the backscattering time and $\eta_n$ is the backscattering efficiency of sensor S-$n$~\cite{Bletsas2008Anti}. It is important to note that when a sensor backscatters signals to the gateway, it still can harvest energy from the RF signals. Although the amount of harvested energy is not enough to transmit data using active wireless transmissions, it is sufficient to sustain backscatter operations of the sensor~\cite{LiuAmbient2013}. Therefore, we do not need to consider the circuit energy consumption for the backscatter mode. 

\subsection{Harvest-then-Transmit Mode} 
\label{subsec:HTTM}

The HTT mode consists of two periods, i.e., harvesting and transmission periods. In the energy harvesting period, the sensors harvest energy from the RF energy source's signals. Then, the sensors use the harvested energy to transmit data in the active transmission period. In the following, we formulate the amount of energy harvested in the first period and the total number of bits transmitted by the sensors in the second period.

\subsubsection{Harvesting Energy} 

As stated in~\cite{Balanis2012}, we can compute the amount of received energy from the RF energy source at sensor S-$n$ in a free space by Friis equation as follows:
\begin{equation}
\label{eq:Friis}
P^{\mathrm{R}}_n = \delta_n P^{\mathrm{T}} \frac{G^{\mathrm{T}} G^{\mathrm{R}}_n \lambda^2}{(4 \pi d_n)^2}	,
\end{equation}
where $P^{\mathrm{R}}_n$ is the received power, $\delta_n$ is the energy harvesting efficiency. $P^{\mathrm{T}}$ is transmit power of the RF energy source, $G^{\mathrm{T}}$ is the antenna gain of the RF energy source, $G^{\mathrm{R}}_n$ is the antenna gain of sensor S-$n$, $\lambda$ is the wavelength, and $d_n$ is the distance between the RF energy source and sensor S-$n$. We then derive the total amount of harvested energy for sensor S-$n$ as follows:
\begin{equation}
\label{eq:amount_EH}
E^{\mathrm{h}}_n = \big((\sum_{n=1}^{N}t^b_n)-t^b_n\big) P^{\mathrm{R}}_n,
\end{equation}
where $\big((\sum_{n=1}^{N}t^b_n)-t^b_n\big)$ is the total energy harvesting time of sensor S-$n$. 

\subsubsection{Transmitting Data} 

After harvesting energy in the first period, sensor S-$n$ uses all the harvested energy to transmit data over $t^a_n$ in the transmission period. According to~\cite{Ju2014Throughput}, we assume that the circuit energy consumption of the sensors is negligible. Let $P^{\mathrm{a}}_n$ denote the transmit power of sensor S-$n$ in the data transmission period $t^a_n$. Therefore, $P^{\mathrm{a}}_n$ can be obtained from $\frac{E^{\mathrm{h}}_n }{t^a_n}$,
where $E^{\mathrm{h}}_n$ is the total amount of harvested energy of sensor S-$n$. From~\cite{Huang2012Decentralized}, given the transmit power $P^{\mathrm{a}}_n$, the transmission rate can be determined as follows:
\begin{equation}
r_n^a = \epsilon_n W \log_2 \left( 1+ \frac{P^{\mathrm{a}}_n}{P_n^0} 	\right)	,
\end{equation}
where $\epsilon_n \in (0,1)$ is the transmission efficiency, $W$ is bandwidth of the channel from the sensor to the gateway, and $P_n^0$ is the ratio between noise power $N_0$ and the channel gain coefficient $g_n$, i.e., $P_n^0=\frac{N_0}{g_n}$. Then, the total number of transmitted bits of sensor S-$n$ using the HTT mode during the transmission time $t^a_n$ is given by:
\begin{equation}
\begin{aligned}
R_n^{\mathrm{a}} & = \psi_n t^a_n \log_2 \left( 1 + \gamma_n \frac{\big((\sum_{n=1}^{N}t^b_n)-t^b_n\big) P^{\mathrm{R}}_n}{t^a_n} 	\right), 
\end{aligned}
\end{equation}
where $\psi_n = \epsilon_n W$ and $\gamma_n = \frac{1}{P_n^0}$. 
The total throughput of sensor S-$n$ is expressed as follows:
\begin{equation}
\begin{aligned}
&	R_n 	 =	t_n^b B^{\mathrm{b}}_n + \psi_n t^a_n \log_2 \left( 1 + \gamma_n \frac{\big((\sum_{n=1}^{N}t^b_n)-t^b_n\big) P^{\mathrm{R}}_n}{t^a_n} 	\right) .	\label{eq:R_n} 
\end{aligned}
\end{equation}
Then, the overall transmission rate of all sensors in the system is given by:
\begin{equation}
\begin{aligned}
& R_{\mathrm{sum}} = \sum_{n=1}^{N} \left[ t_n^b B^{\mathrm{b}}_n + \psi_n t^a_n \log_2 \left( 1 + \gamma_n \frac{\big((\sum_{n=1}^{N}t^b_n)-t^b_n\big) P^{\mathrm{R}}_n}{t^a_n} 	\right)	\right] 	.	\label{eq:R_sum}
\end{aligned}
\end{equation}
In~(\ref{eq:R_sum}), the values of $t^b_n$ and $t^a_n$ must satisfy the constraint $(\mathbf{C}_0)$ in~(\ref{eq:C0}). Additionally, we impose the following constraints:
\begin{equation}
(\mathbf{C}_1) \phantom{10}
\frac{\big((\sum_{n=1}^{N}t^b_n)-t^b_n\big)P^R_n}{t^a_n} \leq P^{\dagger}_n, \phantom{5} \forall n \in \{1, \ldots, N\},
\label{eq:C1}
\end{equation}
\begin{equation}
(\mathbf{C}_2) \phantom{10}
\big((\sum_{n=1}^{N}t^b_n)-t^b_n\big)P^R_n \geq E^0_n, \phantom{5} \forall n \in \{1, \ldots, N\}.
\label{eq:C2}
\end{equation}
The constraint $(\mathbf{C}_1)$  ensures that the transmit power of each sensor must be guaranteed to be lower than or equal to a predefined threshold $P^{\dagger}_n$ due to the power regulation or the power limitation of sensor's circuits. The constraint $(\mathbf{C}_2)$ guarantees that the total amount of harvested energy of sensor S-$n$ must be sufficient for its operations.  
The optimization problem can be formulated as follows:
\begin{equation}
\begin{aligned}
(\mathbf{P}_1) \phantom{10} & \max_{\boldsymbol{t^b}, \boldsymbol{t^a}} 	R_{\mathrm{sum}}, 
& \text{s.t}\quad(\mathbf{C}_0), (\mathbf{C}_1), \textrm{and}
\: (\mathbf{C}_2).
\end{aligned}
\label{eq:OP}
\end{equation}

To find an optimal solution for the optimization problem proposed in~(\ref{eq:OP}), we first prove that the objective function $R_{\mathrm{sum}}$ is a concave function.
\begin{theorem}[]
	\label{theo:convexity}
	The objective function $R_{\mathrm{sum}}$ is a concave function $\forall t_n^b, t^a_n (n \in \mathbb{N})$ satisfying the constraints ($\mathbf{C}_0$), ($\mathbf{C}_1$), and ($\mathbf{C}_2$).
\end{theorem}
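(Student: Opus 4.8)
The plan is to decompose $R_{\mathrm{sum}}$ into a sum of terms and show that each is concave on the feasible region, exploiting the fact that a nonnegative weighted sum of concave functions is concave. The backscatter contributions $t_n^b B_n^{\mathrm{b}}$ are linear in $\mathbf{t}^{\mathrm{b}}$, hence concave, so all the work lies in the logarithmic terms
\begin{equation}
f_n(\mathbf{t}^{\mathrm{b}}, t_n^a) = \psi_n\, t_n^a \log_2\!\left(1 + \gamma_n \frac{\big(\sum_{m=1}^N t_m^b - t_n^b\big)P_n^{\mathrm{R}}}{t_n^a}\right), \quad n = 1,\ldots,N.
\end{equation}

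First I would introduce the affine quantity $u_n(\mathbf{t}^{\mathrm{b}}) = \big(\sum_{m=1}^N t_m^b - t_n^b\big)P_n^{\mathrm{R}} = P_n^{\mathrm{R}}\sum_{m \neq n} t_m^b$, which is a nonnegative affine function of $\mathbf{t}^{\mathrm{b}}$ under $(\mathbf{C}_0)$. Moreover, $(\mathbf{C}_2)$ forces $u_n \geq E_n^0 > 0$ and $(\mathbf{C}_1)$ forces $t_n^a \geq u_n/P_n^{\dagger} > 0$, so on the feasible set both $u_n$ and $t_n^a$ are strictly positive; this is precisely what legitimizes the argument below. Writing $\phi_n(z) = \log_2(1+\gamma_n z)$, we have $f_n = \psi_n\, t_n^a\, \phi_n\!\big(u_n/t_n^a\big)$.

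The key step is to recognize $t_n^a\,\phi_n(u_n/t_n^a)$ as the \emph{perspective} of the scalar function $\phi_n$. Since $\phi_n$ is concave on $z \geq 0$ — its second derivative is $-\gamma_n^2/\big((1+\gamma_n z)^2 \ln 2\big) \leq 0$ — its perspective $(u,t) \mapsto t\,\phi_n(u/t)$ is jointly concave on $\{t > 0,\, u \geq 0\}$. Composing with the affine map $(\mathbf{t}^{\mathrm{b}}, t_n^a) \mapsto \big(u_n(\mathbf{t}^{\mathrm{b}}), t_n^a\big)$ preserves joint concavity, and multiplying by $\psi_n = \epsilon_n W > 0$ preserves it as well; hence each $f_n$ is concave in $(\mathbf{t}^{\mathrm{b}}, t_n^a)$, and therefore in the full variable $(\mathbf{t}^{\mathrm{b}}, \mathbf{t}^{\mathrm{a}})$ since it does not depend on the coordinates $t_m^a$ with $m \neq n$. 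Summing the $f_n$ together with the linear backscatter terms then gives concavity of $R_{\mathrm{sum}}$.

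I expect the main obstacle to be the bookkeeping around the domain rather than any deep inequality: the perspective argument requires $t_n^a > 0$, so one must invoke $(\mathbf{C}_1)$–$(\mathbf{C}_2)$ to guarantee strict positivity on the feasible set (alternatively, extend $f_n$ by its limiting value $0$ at $t_n^a = 0$ and work on the closure of the domain). A self-contained alternative that sidesteps the perspective lemma is to compute the $2\times 2$ Hessian of $q(u,t) = t\log_2(1+\gamma u/t)$ directly, verify that its diagonal entries are nonpositive and its determinant vanishes — so it is negative semidefinite — and then note that the Hessian of $R_{\mathrm{sum}}$ is block structured across the separate $t_n^a$ variables, which is routine but noticeably messier than the perspective route.
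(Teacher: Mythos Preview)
Your argument is correct. The perspective-of-a-concave-function observation is exactly the right structural fact, and your handling of the domain (using $(\mathbf{C}_1)$--$(\mathbf{C}_2)$ to force $u_n>0$, $t_n^a>0$) is more careful than the paper's own treatment.

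The paper takes the route you list as your ``self-contained alternative'': it works term-by-term on $R_n$, writes down the Hessian of $R_n$ in the full $2N$-dimensional variable, and evaluates the quadratic form $\mathbf{v}^\top \mathbf{H}\mathbf{v}$ explicitly, obtaining a single negative perfect square divided by $t_n^a(1+w_n)^2\ln 2$. That is precisely the rank-one negative-semidefinite Hessian you anticipate when you note that the $2\times 2$ Hessian of $q(u,t)=t\log_2(1+\gamma u/t)$ has vanishing determinant; after the affine pullback in $u_n=P_n^{\mathrm R}\sum_{m\neq n}t_m^b$ the quadratic form collapses to $-\psi_n\big(\gamma_n P_n^{\mathrm R}\sum_{m\neq n}v_m^b - w_n v_n^a\big)^2/\big(t_n^a(1+w_n)^2\ln 2\big)$. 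So the two arguments are really the same computation viewed at different levels of abstraction: your perspective lemma packages the Hessian calculation once and for all, while the paper carries it out in coordinates. Your version is cleaner and makes the role of the constraints in securing the domain explicit; the paper's version avoids invoking an external lemma but is correspondingly heavier on bookkeeping.
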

\begin{proof}
Due to limited space, we briefly prove Theorem~\ref{theo:convexity}. To prove that $R_{\mathrm{sum}}$ is a concave function, we first prove that $R_n$ is a concave function of $\boldsymbol{\mathrm{t^b}}$ and $\boldsymbol{\mathrm{t^a}}$. 
We can derive the Hessian matrix of the objective function $R_{\mathrm{n}}$ as $\mathbf{H} = \nabla^2 R_{n} (\boldsymbol{t^b}, \boldsymbol{t^a})$. Given an arbitrary real vector $\mathbf{v} = \left[ \begin{array}{ccccccc} v_1 & \ldots & v_n & \ldots & v_{2n}& \ldots & v_{2N} \end{array} \right]^\top$ we have:
\begin{equation}
\begin{aligned}
& \mathbf{v}^\top \mathbf{H} \mathbf{v} 
& = - \frac{\psi_n \big( 	v_n \gamma_n P^{\mathrm{R}}_n(N-1) - v_{2n}w_n	\big)^2}{t^a_n (1+w_n)^2 \ln 2},
\end{aligned}
\end{equation}
where $w_n = \gamma_n \frac{\big((\sum_{n=1}^{N}t^b_n)-t^b_n\big) P^{\mathrm{R}}_n}{t^a_n}$. Since $t^a_n \geq 0, \psi_n>0, \forall n \in \mathbb{N}$, we have $\mathbf{v}^\top \mathbf{H} \mathbf{v} \leq 0, \forall n \in \mathbb{N}$. Thus $R_n(\boldsymbol{t^b}, \boldsymbol{t^a})$ is a concave function. We then can derive that $R_{\mathrm{sum}}$ is a concave function. The proof now is completed. 
\end{proof}

Since the objective function $R_{\mathrm{sum}}$ is a concave function, we adopt the \emph{interior-point method} to find an optimal solution of the optimization problem given in~(\ref{eq:OP}).


\section{Performance Evaluation} 
\begin{figure}[!]
	\captionsetup{singlelinecheck=off}
	\centering
	\includegraphics[scale=0.22]{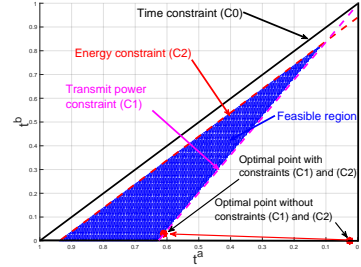}
	\caption{Feasible region and the constraints when $\delta_n$ = 0.6.}
	\label{Fig.objective_function}
\end{figure}
\begin{figure}[!]
	\captionsetup{singlelinecheck=off}
	\centering
	\includegraphics[scale=0.22]{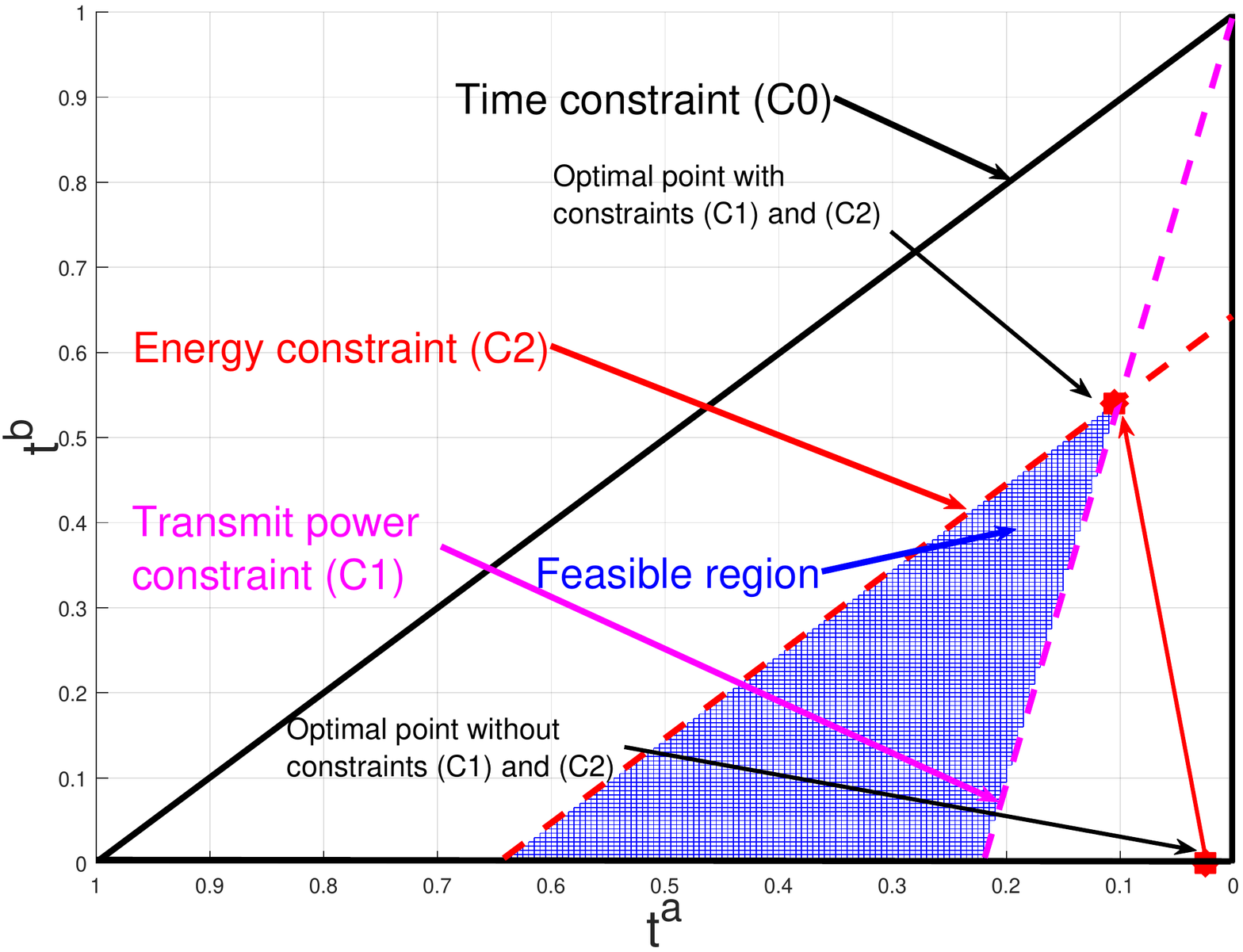}
	\caption{Feasible region and the constraints when $\delta_n$ = 0.1.}
	\label{Fig.objective_function1}
\end{figure}
\begin{figure*}[htbp]
	\centering
	\begin{subfigure}[b]{0.3\textwidth}
		\centering
		\includegraphics[scale=0.2]{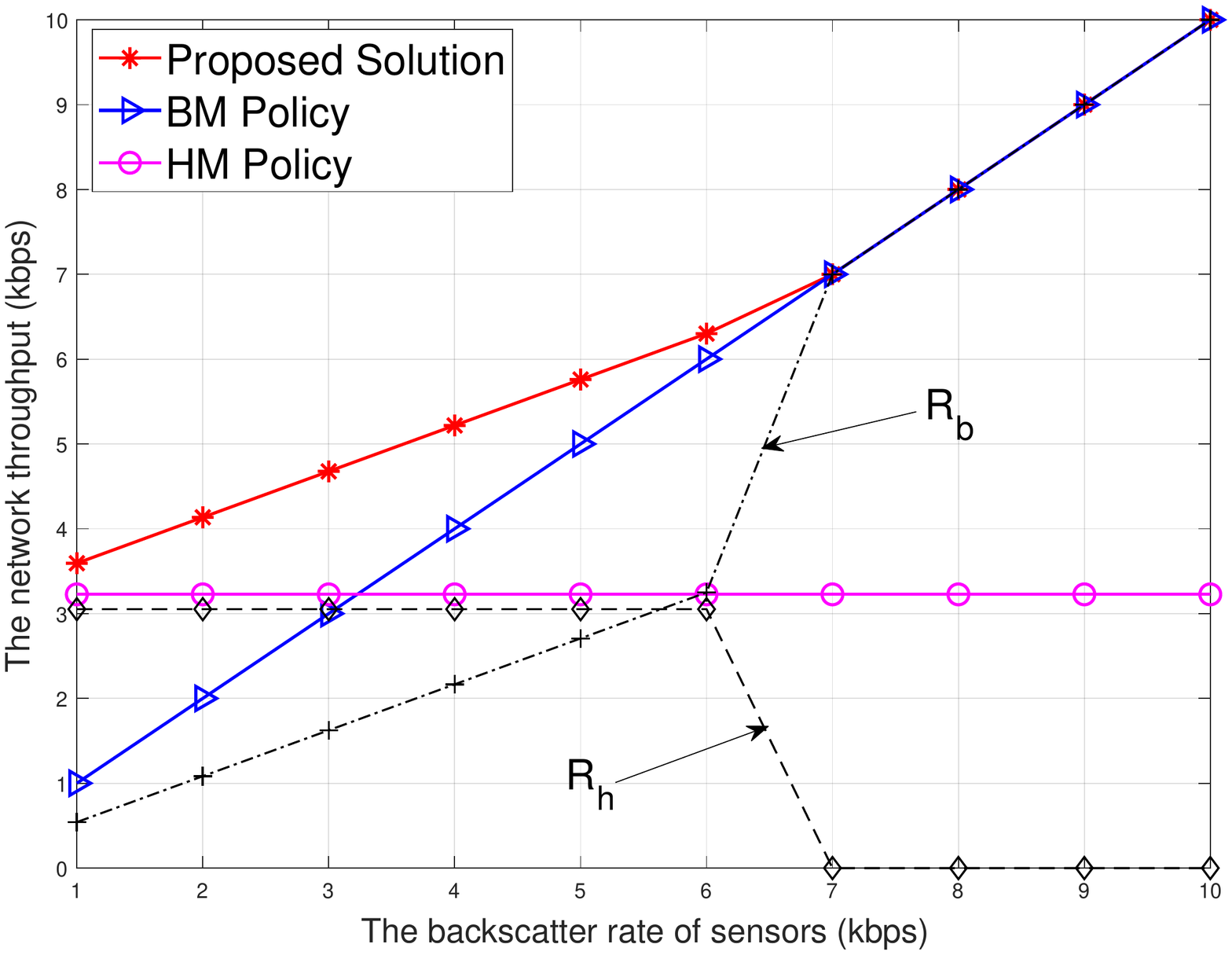}
		\caption{}
	\end{subfigure}%
	~ 
	\begin{subfigure}[b]{0.3\textwidth}
		\centering
		\includegraphics[scale=0.2]{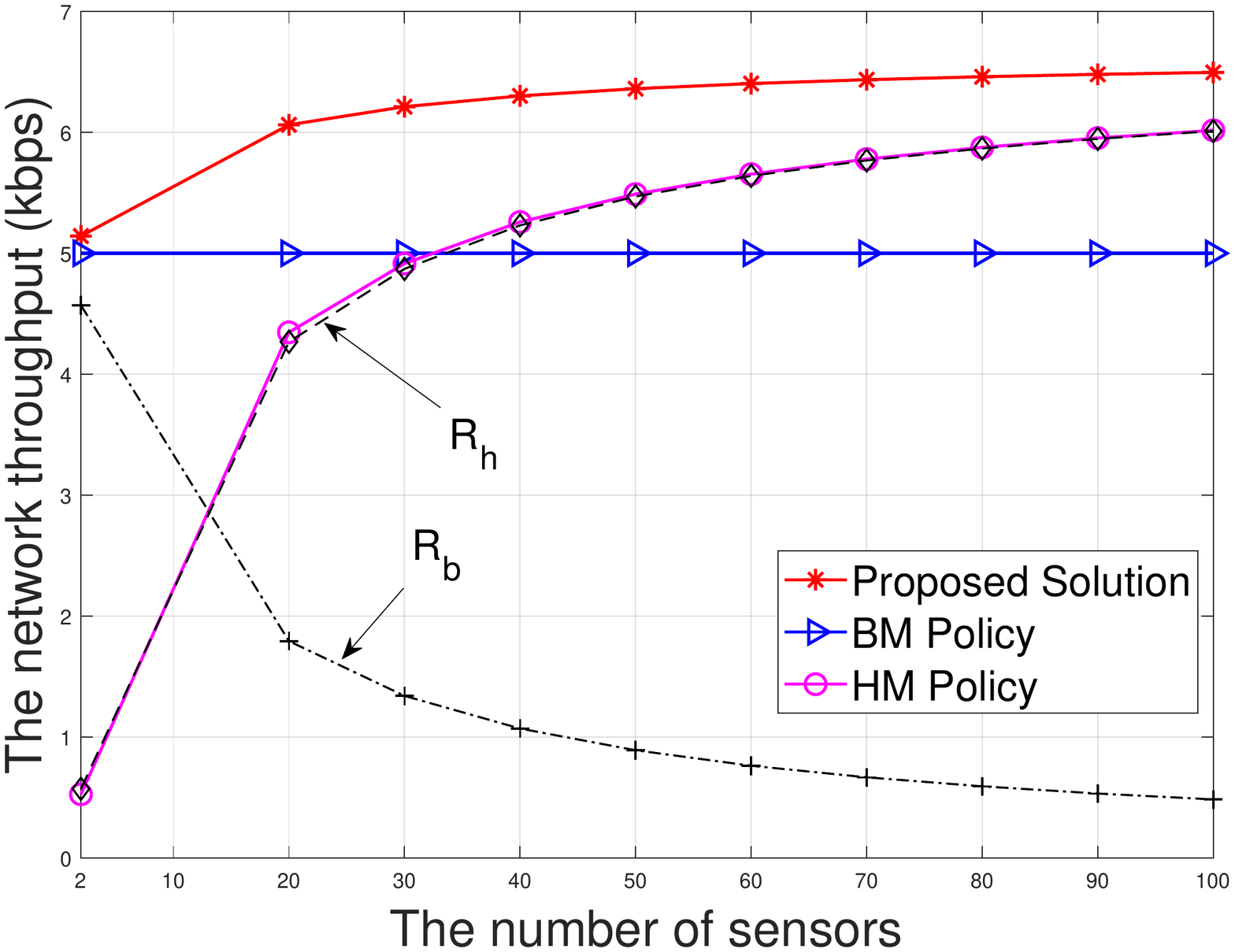}
		\caption{}
	\end{subfigure}%
	~ 
	\begin{subfigure}[b]{0.3\textwidth}
		\centering
		\includegraphics[scale=0.2]{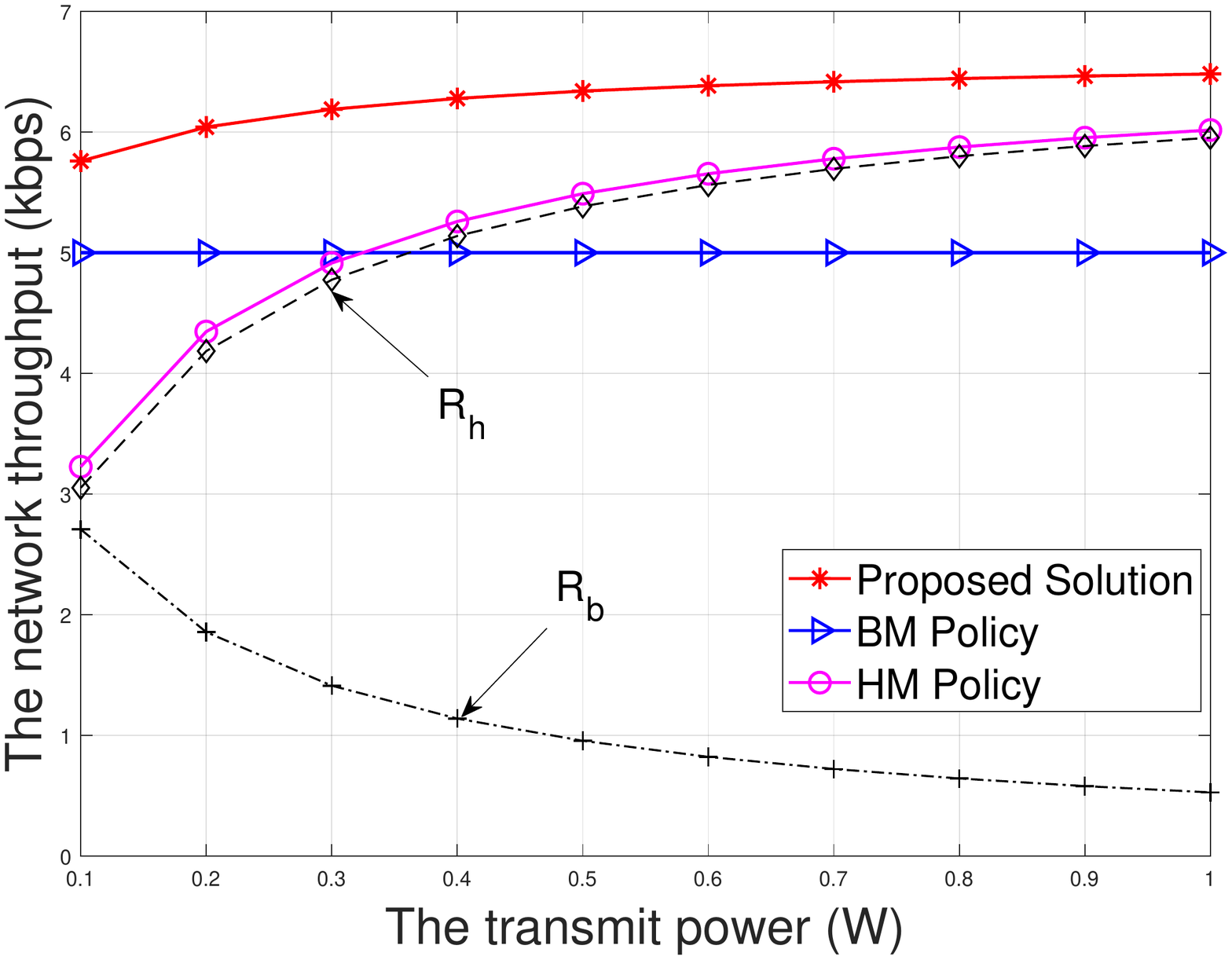}
		\caption{}
	\end{subfigure}
	\caption{The network throughput of the system when (a) the backscatter rate of sensors, (b) the number of sensors, and (c) the transmit power of the RF energy source are varied. $R_b$ and $R_h$ are the network throughput obtained by the backscatter mode and the HTT mode of the proposed solution, respectively.} 
	\label{fig:vary_Backscatter_100_Power}
\end{figure*}
\subsection{Experiment Setup} 
In the system under consideration, the bandwidth and the frequency of the RF signals are 10 MHz and 2.4 GHz, respectively. The RF energy source antenna gain and the sensors antenna gain are set at 6 dBi as in~\cite{Kim2010Reverse}. The distance from the RF energy source to the sensors is assumed to be 10 meters. Unless otherwise stated, the transmit power of the RF energy source is 20 dBm as in~~\cite{Iyer2016Inter}. The energy harvesting efficiency and the data transmission efficiency are set at 0.6. We assume that the sensors are equipped with an RF transceiver CC2420 (\url{http://www.ti.com/product/CC2420}) and its transmit power is -20 dBm as stated in~\cite{Strommer2007Ultralowpower}.  In addition, we compare the performance of the proposed solution with that introduced in~\cite{Ju2014Throughput}, i.e., using HTT protocol only, and with the solution using ambient backscatter communication only~\cite{LiuAmbient2013}. We refer to the former as HTT mode (HM) policy and the latter as backscatter mode (BM) policy.

\subsection{Numerical Results}
We first consider the case with one sensor to show the impact of the constraints on the optimal time allocation policy of the system. We set the transmit power of the RF energy source at 1.8 W, the backscatter rate at 4 kbps, the minimum required power and the maximum transmit power of the sensor is $10^{-6}$ W and $10^{-5}$ W, respectively. As observed in Fig.~\ref{Fig.objective_function}, the optimal time scheduling policy will be changed due to the influence of constraints on feasible region of the solution set. In particular, without energy and transmission power constraints, the system will spend most of the time to harvest energy in the first phase and then use such energy to transmit data  in the second phase. However, under the constraints, the system has to balance among backscatter, energy harvesting, and transmission time to maximize its throughput and satisfy the constraints. In Fig.~\ref{Fig.objective_function1}, when we reduce the energy harvesting efficiency to 0.1 for low received power at the rectifier, the time spending for backscattering increase up to 55\%. This implies that the energy harvesting efficiency is also a critical factor largely affecting the time allocation policy of the system.

We then increase the number of sensors to 10 and evaluate the system performance. The backscatter rates of the sensors are varied from 1 kbps to 10 kbps by changing resistor-capacitor (RC) circuit components~\cite{LiuAmbient2013} in Fig.~\ref{fig:vary_Backscatter_100_Power}(a) and remain at 5 kbps in Fig.~\ref{fig:vary_Backscatter_100_Power}(b) and Fig.~\ref{fig:vary_Backscatter_100_Power}(c). As shown in Fig.~\ref{fig:vary_Backscatter_100_Power}(a), the proposed solution achieves the highest throughput when the backscatter rate increases from 1 kbps to 7 kbps. When the backscatter rate is above 7 kbps, the sensors spend all the time to backscatter their data to the receivers. This means that the proposed solution will switch to the backscatter mode when the backscatter rates of the sensors are high. In Fig.~\ref{fig:vary_Backscatter_100_Power}(b), we vary the number of sensors from 2 to 100 sensors. As shown in Fig.~\ref{fig:vary_Backscatter_100_Power}(b), the throughput of the HTT mode increases. This result stems from the fact that in the energy harvesting phase, multiple sensors can harvest energy simultaneously while they cannot backscatter data at the same time. Thus, the sensors will spend more time to harvest energy instead of backscattering RF signals. Similarly, the overall throughput of the HTT mode also increases in the case when the transmit power of the RF energy source increases as shown in Fig.~\ref{fig:vary_Backscatter_100_Power}(c). The reason is that when the transmit power increases, the sensors can harvest more energy from the RF signals. As a result, the sensors will adopt the HTT mode. Importantly, in all cases, the proposed solution always achieves the best performance as the backscattering time and transmission time are balanced.

\section{Summary} 
In this letter, we have studied the network performance optimization problem for the WPCN with backscatter communications. In this network, the sensors can cooperate to maximize the overall network throughput under energy constraints of low-power wireless sensor networks. To do so, we formulate the performance optimization problem with energy constraints and prove that the problem is concave. Through numerical results, we have shown that our proposed solution always achieves the best performance compared with conventional solutions under different parameter settings.


\begin{thebibliography}{1}

\bibitem{WPCN}
S.~Bi, C.~K.~Ho, and R.~Zhang, ``Wireless powered communication: Opportunities and challenges,'' \emph{IEEE Communications Magazine}, vol. 53, no. 4, Apr. 2015, pp. 117-125.
\bibitem{Huynh2017Survey}
N.~V.~Huynh, D.~T.~Hoang, X.~Lu, D.~Niyato, P.~Wang, and D.~I.~Kim, ``Ambient Backscatter Communications: A Contemporary Survey,'' [Online]. Available: arXiv:1712.04804.
\bibitem{Ju2014Throughput}
H.~Ju and R.~Zhang, ``Throughput maximization in wireless powered communication networks,'' \emph{IEEE Transactions on Wireless Communications}, vol. 13, no. 1, Jan. 2014, pp. 418-428.
\bibitem{Zhou2017An}
X.~Zhou, G.~Wang, Y.~Wang, and J.~Cheng, ``An Approximate BER
Analysis for Ambient Backscatter Communication Systems with Tag
Selection,'' \emph{IEEE Access}, vol. 5, Jul. 2017, pp. 22552-22558.
\bibitem{Hoang2017Ambient}
D.~T.~Hoang, D.~Niyato, P.~Wang, D.~I.~Kim, and Z.~Han, ``Ambient backscatter: A new approach to improve network performance for RF-powered cognitive radio networks,'' \emph{IEEE Transactions on Communications}, vol. 65, no. 9, Jun. 2017, pp. 3659-3674.
\bibitem{Hoang2017Optimal}
D.~T.~Hoang, D.~Niyato, P.~Wang, and D.~I.~Kim, ``Optimal time sharing in RF-powered backscatter cognitive radio networks,'' \emph{in Proc. 2017 IEEE International Conference on Communications (ICC)}, Paris, France, Jul. 2017, pp. 1-6.
\bibitem{LiuAmbient2013}
V.~Liu, A.~Parks, V.~Talla, S.~Gollakota, D.~Wetherall, and J.~R.~Smith, ``Ambient backscatter: Wireless communication out of thin air,'' in \emph{Proceedings of the ACM SIGCOMM}, Hong Kong, Aug. 2013, pp.~39-50.
\bibitem{Bletsas2008Anti}
A.~Bletsas, S.~Siachalou, and J.~N.~Sahalos, ``Anti-collision tags for backscatter sensor networks,'' \emph{in Proc. of 38th European Microwave Conference EuMC}, Amsterdam, Netherlands, Oct. 2008, pp. 179-182.
\bibitem{Balanis2012}
C.~A.~Balanis, \emph{Antenna Theory: Analysis and Design}. New York, NY, USA: Wiley, 2012. 
\bibitem{Huang2012Decentralized}
H.~Huang and V.~K.~N.~Lau, ``Decentralized delay optimal control for interference networks with limited renewable energy storage,'' \emph{IEEE Transactions on Signal Processing}, vol. 60, no. 5, May. 2012, pp. 2552-2561. 

\bibitem{Kim2010Reverse}
D.~Y.~Kim, H.~S.~Jo, H.~Yoon, C.~Mun, B.~J.~Jang, and J.~G.~Yook, ``Reverse-link interrogation range of a UHF MIMO-RFID system in Nakagami-m fading channels,'' \emph{IEEE Transactions on Industrial Electronics}, vol. 57, no. 4, Apr. 2010, pp.~1468-1477. 
\bibitem{Iyer2016Inter}
V.~Iyer, V.~Talla, B.~Kellogg, S.~Gollakota, and J.~Smith, ``Inter-technology backscatter: Towards internet connectivity for implanted devices,'' in \emph{Proc. of 2016 Conference on ACM SIGCOMM}, Florianopolis, Brazil, Aug. 2016, pp. 356-369.
\bibitem{Strommer2007Ultralowpower}
E.~Strommer, M.~Hillukkala, and A.~Ylisaukko-oja, ``Ultra-low power sensors with near field communication for mobile applications,'' \emph{Wireless Sensor and Actor Networks}, 2007, pp. 131-142.

\end{thebibliography}
\end{document}